\documentclass{article}

\usepackage{amsmath,amsxtra,amssymb,amsthm,amsfonts}
\usepackage{bbm}
\usepackage{mathtools}
\usepackage{color,graphicx}
\usepackage[margin=1.05in]{geometry}
\usepackage[dvipsnames]{xcolor}
\usepackage{etoolbox}
\usepackage{url}
\usepackage[section]{placeins}
\usepackage{booktabs}

\usepackage[linktocpage=true,colorlinks,citecolor=blue,linkcolor=blue,pagebackref]{hyperref}
\usepackage{cleveref}
\usepackage{caption}
\usepackage{subcaption}

\usepackage{pgfplots}
\usepackage{pgf}
\usepackage{tikz}
\usetikzlibrary{patterns}
\usetikzlibrary{arrows.meta}
\usepgfplotslibrary{patchplots} % LATEX and plain TEX
\usetikzlibrary{pgfplots.patchplots} % LATEX and plain TEX
\pgfplotsset{width=9cm,compat=1.5.1}

\usepackage{enumitem}
\usepackage{tikz-cd}
\usetikzlibrary{positioning}
\usetikzlibrary{calc}

\newcommand{\bij}{%
  \hookrightarrow\mathrel{\mspace{-15mu}}\rightarrow
}

\makeatletter
\newcommand{\xbij}[2][]{%
  \lhook\joinrel
  \ext@arrow 0359\rightarrowfill@ {#1}{#2}%
  \mathrel{\mspace{-15mu}}\rightarrow
}
\makeatother

\usepackage{algorithm,algpseudocode}
\algnewcommand\algorithmicforeach{\textbf{for each}}
\algdef{S}[FOR]{ForEach}[1]{\algorithmicforeach\ #1\ \algorithmicdo}

\algnewcommand{\algorithmicand}{\textbf{ and }}
\algnewcommand{\algorithmicor}{\textbf{ or }}
\algnewcommand{\algorithmicnot}{\textbf{ not }}
\algnewcommand{\algorithmicbreak}{\textbf{break}}
\algnewcommand{\algorithmiccontinue}{\textbf{continue}}
\algnewcommand{\AND}{\algorithmicand}
\algnewcommand{\OR}{\algorithmicor}
\algnewcommand{\NOT}{\algorithmicnot}
\algnewcommand{\BREAK}{\algorithmicbreak}
\algnewcommand{\CONTINUE}{\algorithmiccontinue}

\algrenewcommand\algorithmiccomment[1]{\hfill{\color{gray}\(\triangleright\)
#1}}

\makeatletter
\newcommand{\LeftComment}[1]{%
  \Statex
  {%
    \setlength\leftskip{\ALG@thistlm}%
    \noindent
    \color{gray}\(\triangleright\)\enspace#1\par%
  }%
}

\newcommand{\LeftCommentIndent}[1]{%
  \Statex
  {%
    \setlength\leftskip{\dimexpr\ALG@thistlm+\algorithmicindent}%
    \noindent
    \color{gray}\(\triangleright\)\enspace#1\par%
  }%
}
\makeatother

\def\Lin{\mathcal{L}}

\def\0{\mathbf{0}}
\def\1{\mathbf{1}}

\numberwithin{equation}{section}
\theoremstyle{plain}

\newtheorem{theorem}{Theorem}[section]
\newtheorem{proposition}[theorem]{Proposition}
\newtheorem{lemma}[theorem]{Lemma}
\newtheorem{corollary}[theorem]{Corollary}

\newtheorem{definition}{Definition}
\newtheorem{fact}{Fact}

\numberwithin{equation}{section}
\theoremstyle{plain}

\title{
  A polynomial-time algorithm for recognizing high-bandwidth graphs
}

\author{
  Luis M. B. Varona\textsuperscript{1,2,3}
}

\begin{document}

\maketitle

\begin{abstract}
  An unweighted, undirected graph $G$ on $n$ nodes is said to have \emph{bandwidth} at most $k$ if its nodes can be labelled from $0$ to $n - 1$ such that no two adjacent nodes have labels that differ by more than $k$. It is known that one can decide whether the bandwidth of $G$ is at most $k$ in $O(n^k)$ time and $O(n^k)$ space using dynamic programming techniques. For small $k$ close to $0$, this approach is effectively polynomial, but as $k$ scales with $n$, it becomes superexponential, requiring up to $O(n^{n - 1})$ time (where $n - 1$ is the maximum possible bandwidth). In this paper, we reformulate the problem in terms of bipartite matching for sufficiently large $k \ge \lfloor (n - 1)/2 \rfloor$, allowing us to use Hall's marriage theorem to develop an algorithm that runs in $O(n^{n - k + 1})$ time and $O(n)$ auxiliary space (beyond storage of the input graph). This yields polynomial complexity for large $k$ close to $n - 1$, demonstrating that the bandwidth recognition problem is solvable in polynomial time whenever either $k$ or $n - k$ remains small.\\\medskip

  \noindent \textbf{Keywords:} graph algorithms, graph bandwidth, bandwidth reduction, Hall's marriage theorem\medskip

  \noindent \textbf{ACM Classification:}
  G.2.2; % Graph Theory
  F.2.2 % Nonnumerical Algorithms and Problems
\end{abstract}

\addtocounter{footnote}{1}
\footnotetext{Department of Politics \& International Relations,
Mount Allison University, Sackville, NB, Canada E4L 1E4}
\addtocounter{footnote}{1}
\footnotetext{Department of Economics, Mount Allison University,
Sackville, NB, Canada E4L 1E4}
\addtocounter{footnote}{1}
\footnotetext{Department of Mathematics \& Computer Science, Mount
Allison University, Sackville, NB, Canada E4L 1E4}

\section{Introduction}\label{section:1}

Let $G$ be an unweighted, undirected finite graph on $n$ nodes with no self-loops. The \emph{bandwidth} of $G$ under some labelling (or ``layout'') $\pi$ of its nodes is the minimum non-negative integer $k \in \{0, 1, \ldots, n - 1\}$ such that $\lvert \pi(u) - \pi(v) \rvert \le k$ for every edge $\{u, v\}$ in $G$; we denote this quantity by $\beta_\pi(G)$. A long-standing problem in graph theory and combinatorics is to find a layout $\pi$ for which $\beta_{\pi}(G)$ is small. Many computational processes involve sparse symmetric matrices that can be viewed as adjacency matrices, and reducing the bandwidth of the corresponding graphs helps improve properties like cache locality and low storage space. As such, graph bandwidth reduction has many practical applications in engineering and scientific computing, including solving linear systems, approximating partial differential equations, optimizing circuit layout, and more \cite{MS14}.

There are two variants of the bandwidth reduction problem: \emph{minimization} (finding a layout $\pi$ that minimizes or nearly minimizes $\beta_{\pi}(G)$) and \emph{recognition} (finding a layout $\pi$ such that $\beta_{\pi}(G) \le k$ given some fixed $k \ge 0$). Although most bandwidth reduction applications rely on minimization algorithms (see, e.g., \cite{MS14}), bandwidth recognition (the primary subject of this paper) also has applications in other fields of theoretical mathematics and computer science. In computational complexity theory, for instance, \cite{RS83} showed that a graph on $n$ nodes with bandwidth $\le k$ requires at most $\min\{2k^2 + k + 1, 2k\log_2{n}\}$ pebbles in the computational pebble game. Quantum information theorists, too, are interested in determining whether the density matrix of a quantum state (or rather, the associated graph) has bandwidth $\le k$, since this is necessary (although not sufficient) for a property called ``$k$-incoherence'' \cite{JMP25}.

In 1978, \cite{GGJK78} presented the first bandwidth recognition algorithm, designed to determine whether a graph with $m$ edges and $n$ nodes has bandwidth at most $2$ in $O(m + n)$ time. This was soon followed by the first general-case algorithm to determine whether a graph has bandwidth at most $k$ for arbitrary $k$, requiring $O(n^{k + 1})$ time and space \cite{Sax80}. This algorithm was later improved upon by \cite{GS84}, who reduced both time and space complexity to $O(n^k)$. The most recent relevant work was conducted by \cite{DCM99} and \cite{CSG05}---although the main contributions of both studies were (exact) bandwidth minimization algorithms, these algorithms iteratively invoke their own recognition algorithms as subroutines. All existing approaches, however, are optimized for low-bandwidth recognition, exhibiting polynomial complexity only for small $k$---when $k$ scales with $n$, these algorithms exceed even exponential complexity, requiring up to $O(n^{n - 1})$ time (since $n - 1$ is the maximum possible bandwidth of a graph on $n$ nodes).

This paper addresses this gap in the literature, being the first to tackle the problem of high-bandwidth recognition (for $k$ close to $n - 1$). We present an algorithm that requires $O(n^{n - k + 1})$ time and $O(n)$ auxiliary space, making it effectively polynomial when $n - k$ is small. Our algorithm begins by iterating over all possible assignments of the first $n - k - 1$ nodes. We then exploit Hall's marriage theorem in each iteration to efficiently verify whether the remaining nodes can be placed to achieve bandwidth $\le k$, breaking early from the loop if a valid layout is found. This represents the first work to explicitly address the bandwidth recognition problem in the general case since 1984 \cite{GS84}, and the first to advance general-case bandwidth recognition algorithms in any capacity (indirect or otherwise) since 2005 \cite{CSG05}.

\subsection{Organization of the paper}\label{subsection:1.1}

We begin in Section \ref{section:2} by introducing the notation used throughout the paper and covering preliminary definitions and results. In Section \ref{section:3}, we build on these concepts, using some implications of Hall's marriage theorem to lay the theoretical groundwork for the algorithm. We then provide a detailed description of our algorithm in Section \ref{section:4}, accompanied by a formal complexity analysis and comparisons to other bandwidth recognition algorithms. In Section \ref{section:5}, we benchmark our algorithm against these previous approaches, demonstrating that it is indeed generally more efficient for large $k$ close to the maximum possible bandwidth of $n - 1$. Finally, we close in Section \ref{section:6} with some open questions and suggestions for future work.

\section{Preliminaries and definitions}\label{section:2}

We exclusively consider unweighted, undirected, finite graphs with no self-loops. Given a graph $G$, we write $V(G)$ for the node set of $G$, $E(G)$ for the edge set of $G$, and $A(G)$ for the adjacency matrix of $G$. Although less frequently, we also use $\varepsilon_G(v)$ to denote the eccentricity of $v$ in $G$ (i.e., the maximum shortest-path distance between $v$ and any other node in $G$) and $N_{G,k}(v)$ to denote the $k$-hop neighbourhood of $v$ in $G$ (i.e., the set of nodes in $G$ at distance between $1$ and $k$ from $v$).

We follow the standard convention of $0$-based indexing for all data structures and algorithms. Given an array \texttt{arr} and a hash map \texttt{map}, we write $\texttt{arr}[i]$ for the element of \texttt{arr} at index $i$ and $\texttt{map}[k]$ for the value in \texttt{map} associated with the key $k$. Moreover, we assume constant-time arithmetic and array access in all our complexity evaluations (rather than measure complexity in terms of bit operations).

Finally, on the topic of function notation: we use $\operatorname{im}(f)$ to denote the image of the map $f$, $g : A \bij B$ to indicate that $g : A \to B$ is bijective, and $h : C \hookrightarrow D$ to indicate that $h : C \to D$ is injective.

We now introduce further terminology---as well as formalize some existing terms already introduced in Section \ref{section:1}---necessary to understand our algorithm.

\begin{definition}[Layout]\label{definition:1}
    Let $G$ be a graph on $n$ nodes. A \emph{layout} $\pi$ of $G$ is a bijective map $\pi : V(G) \bij \{0, 1, \ldots, n - 1\}$, labelling the nodes of $G$ with distinct consecutive non-negative integers starting from $0$.
\end{definition}

\begin{definition}[Bandwidth]\label{definition:2}
    Let $G$ be a graph on $n$ nodes and $\pi$ be a layout of $G$. The \emph{layout bandwidth} of $G$ with respect to $\pi$, denoted by $\beta_{\pi}(G)$, is the minimum non-negative integer $k \ge 0$ such that $\lvert \pi(u) - \pi(v) \rvert \le k$ for all edges $\{u, v\} \in E(G)$. The \emph{graph bandwidth} (or simply \emph{bandwidth}) of $G$, denoted by $\beta(G)$, is the minimum value of $\beta_{\varphi}(G)$ over all layouts $\varphi$ of $G$.
\end{definition}

Before moving on, we note some basic facts regarding graph bandwidth that will later prove useful.

\begin{fact}\label{fact:1}
    Given a graph $G$ on $n$ nodes, $\beta(G)$ is always at most $n - 1$, since the maximum difference in labels of any two nodes $u, v \in V(G)$ is achieved when $\pi(u) = 0$ and $\pi(v) = n - 1$.
\end{fact}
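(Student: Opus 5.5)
The claim is that $\beta(G) \le n - 1$ for every graph $G$ on $n$ nodes. The plan is to exhibit one layout whose layout bandwidth is at most $n - 1$, and then use the fact that $\beta(G)$ is a minimum over all layouts. The argument follows directly from Definitions \ref{definition:1} and \ref{definition:2}.

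First, fix an arbitrary layout $\pi$ of $G$. One exists because $V(G)$ has exactly $n$ elements, so any enumeration of the nodes gives a bijection onto $\{0, 1, \ldots, n - 1\}$. Take any edge $\{u, v\} \in E(G)$. Both $\pi(u)$ and $\pi(v)$ lie in $\{0, \ldots, n - 1\}$, so
\[
\lvert \pi(u) - \pi(v) \rvert \le (n - 1) - 0 = n - 1.
\]
Equality can hold only when one endpoint is labelled $0$ and the other $n - 1$.

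Hence $k = n - 1$ satisfies the condition in Definition \ref{definition:2}, and the minimum such $k$ obeys $\beta_\pi(G) \le n - 1$. If $E(G)$ is empty, the condition holds vacuously and $\beta_\pi(G) = 0 \le n - 1$. Finally, $\beta(G)$ is the minimum of $\beta_\varphi(G)$ over all layouts $\varphi$, so $\beta(G) \le \beta_\pi(G) \le n - 1$.

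No step here is a real obstacle. The only points needing care are that the set of layouts is nonempty, so the minimum is well defined, and the degenerate case $n = 1$, where $n - 1 = 0$ and the bound still holds trivially.
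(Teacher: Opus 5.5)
Your proof is correct and follows the same route as the paper's one-line justification: any two labels in $\{0, \ldots, n-1\}$ differ by at most $n-1$, so every layout has layout bandwidth at most $n-1$, and therefore so does the minimum $\beta(G)$. Your extra remarks on the existence of a layout and on the edgeless and $n = 1$ cases are harmless additions that the paper leaves implicit.
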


\begin{fact}\label{fact:2}
    For any graph $G$, $\beta(G)$ is equal to the maximum bandwidth of its connected components, since connected components share no edges and can therefore be arranged sequentially in any layout.
\end{fact}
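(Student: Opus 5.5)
The claim to prove is Fact~\ref{fact:2}: for any graph $G$, $\beta(G)$ equals the maximum of $\beta(C)$ over the connected components $C$ of $G$. I will prove the two inequalities separately.

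\emph{Lower bound.} Let $C$ be a component with $|V(C)| = m$, and let $\pi$ be an optimal layout of $G$. Restrict $\pi$ to $V(C)$ and relabel the nodes of $C$ by rank, listing them in increasing order of $\pi$ and assigning $0, 1, \ldots, m-1$. Call the result $\pi_C$. For $u, v \in V(C)$ with $\pi(u) < \pi(v)$, the number of $C$-nodes strictly between them under $\pi_C$ is at most the number of $G$-nodes strictly between them under $\pi$. Hence $|\pi_C(u) - \pi_C(v)| \le |\pi(u) - \pi(v)|$. Every edge of $C$ is an edge of $G$, so $\beta(C) \le \beta_{\pi_C}(C) \le \beta_\pi(G) = \beta(G)$. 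Taking the maximum over all $C$ gives $\max_C \beta(C) \le \beta(G)$.

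\emph{Upper bound.} Let the components be $C_1, \ldots, C_r$ with sizes $m_1, \ldots, m_r$, and choose an optimal layout $\pi_i$ of each $C_i$. Define $\pi(v) = \pi_i(v) + \sum_{j<i} m_j$ for $v \in V(C_i)$. This is a bijection onto $\{0, \ldots, n-1\}$, since it places the blocks consecutively. Every edge of $G$ lies inside a single component, because components share no edges. An edge in $C_i$ therefore has stretch $|\pi_i(u) - \pi_i(v)| \le \beta(C_i)$, since the offset cancels. So $\beta(G) \le \beta_\pi(G) \le \max_i \beta(C_i)$.

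The only step that needs care is the compression argument in the lower bound. The naive restriction of $\pi$ is not itself a layout of $C$, because its labels are not consecutive. The fix is the rank relabelling above, which never increases any pairwise distance. Everything else is direct from Definitions~\ref{definition:1} and~\ref{definition:2}.
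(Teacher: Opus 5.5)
Your proof is correct and follows the paper's reasoning: the paper's only justification is the sequential, block-by-block arrangement of components, which is exactly your upper-bound construction. Your rank-relabelling argument for $\max_C \beta(C) \le \beta(G)$ supplies the reverse inequality, which the paper leaves implicit, and it is sound because compressing labels to ranks never increases the stretch of any edge.
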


The focus of this paper is developing an efficient algorithm to determine whether $\beta(G) \le k$ for some graph $G$ and integer $k$. In particular, we are concerned with \emph{high-bandwidth} graph recognition---determining whether $\beta(G) \le k$ for ``large'' $k$. Specifically, we require $\left\lfloor \frac{n - 1}{2} \right\rfloor \le k < n - 1$, so that mappings which we call left and right partial layouts have nonempty domains that do not overlap with each other.

\begin{definition}[Partial layouts]\label{definition:3}
    Let $G$ be a graph on $n$ nodes and $k \in \left\{\left\lfloor \frac{n - 1}{2} \right\rfloor, \left\lfloor \frac{n - 1}{2} \right\rfloor + 1, \ldots, n - 2\right\}$ be an integer. Then:
    \begin{enumerate}[label=(\roman*)]
        \item A \emph{left partial layout} of $G$ (with respect to bandwidth $k$) is an injective map $\mathcal{L} : \{0, 1, \ldots, n - k - 2\} \hookrightarrow V(G)$, determining the first $n - k - 1$ nodes placed by some layout of $G$.
        \item A \emph{right partial layout} of $G$ (w.r.t. bandwidth $k$) is an injective map $\mathcal{R} : \{k + 1, k + 2, \ldots, n - 1\} \hookrightarrow V(G)$, determining the last $n - k - 1$ nodes placed by some layout of $G$.
        \item A left partial layout $\Lin$ and a right partial layout $\mathcal{R}$ (w.r.t. bandwidth $k$) are \emph{compatible} if there exists a layout $\pi$ such that $\pi^{-1}$ agrees with $\Lin$ and $\mathcal{R}$ on their domains (equivalently, if $\operatorname{im}(\Lin) \cap \operatorname{im}(\mathcal{R}) = \emptyset$). In such a case, $\pi$ \emph{induces} $\Lin$ and $\mathcal{R}$ as its left and right partial layouts (w.r.t. bandwidth $k$).
    \end{enumerate}
\end{definition}

The domain of any left or right partial layout (with respect to bandwidth $k$) consists of precisely $n - k - 1$ positions, so it is now clear why we restrict our attention to the case that $\left\lfloor \frac{n - 1}{2} \right\rfloor \le k < n - 1$. If $k = n - 1$, then $n - k - 1 = 0$, so any left or right partial layout (w.r.t. bandwidth $k$) will have an empty domain. (Indeed, this is a trivial case---as noted in Fact \ref{fact:1}, $\beta(G) \le n - 1$ for every graph $G$ on $n$ nodes, and any layout suffices.) If $k < \left\lfloor \frac{n - 1}{2} \right\rfloor$, then $2(n - k - 1) > n$, indicating that the domains of any left and any right partial layout (w.r.t. bandwidth $k$) will not be disjoint.

This would defeat the entire purpose of separately specifying which nodes occupy the leftmost and rightmost positions of a layout to reduce the search space for bandwidth recognition. As will soon be formalized in Lemma \ref{lemma:3.1}, whether a layout $\pi$ of $G$ satisfies $\beta_{\pi}(G) \le k$ depends solely on which nodes occupy these outer positions---namely, $\{0, 1, \ldots, n - k - 2\}$ on the left and $\{k + 1, k + 2, \ldots, n - 1\}$ on the right---since only edges between these two regions can have endpoints whose labels differ by more than $k$. Keeping the domains disjoint allows us to treat these two regions as independent, so that we may enumerate all possible left partial layouts and, for each, determine whether a valid right partial layout exists that induces bandwidth $\le k$. (Theorem \ref{theorem:3.2} provides an efficient criterion for this.)

\section{Preliminary results}\label{section:3}

We are now prepared to lay the mathematical foundations for our algorithm.

\begin{lemma}\label{lemma:3.1}
    Let $G$ be a graph on $n$ nodes and $k \in \left\{\left\lfloor \frac{n - 1}{2} \right\rfloor, \left\lfloor \frac{n - 1}{2} \right\rfloor + 1, \ldots, n - 2\right\}$ be an integer. A layout $\pi$ of $G$ satisfies $\beta_{\pi}(G) \le k$ if and only if $\{\pi^{-1}(i), \pi^{-1}(k + j + 1)\} \notin E(G)$ for all $0 \le i < n - k - 1$ and $i \le j < n - k - 1$.
\end{lemma}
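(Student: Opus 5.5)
The proof reduces the bandwidth condition to a statement about label pairs. By Definition \ref{definition:2}, $\beta_{\pi}(G) \le k$ holds exactly when every edge $\{u,v\}$ satisfies $\lvert \pi(u) - \pi(v)\rvert \le k$. Since $\pi$ is a bijection onto $\{0,\ldots,n-1\}$, this is equivalent to the following: for every pair of labels $a < b$ with $b - a > k$, the pair $\{\pi^{-1}(a), \pi^{-1}(b)\}$ is not an edge. So the whole task is to show that the set of label pairs $\{(a,b) : 0 \le a < b \le n-1,\ b - a \ge k+1\}$ is exactly the set of pairs $(i, k+j+1)$ with $0 \le i < n-k-1$ and $i \le j < n-k-1$. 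Once that is shown, both directions of the lemma follow at once.

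I would prove this set equality by checking inclusion in both directions.

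First, take $(i,j)$ in the stated index range and set $a = i$ and $b = k+j+1$. The condition $j \ge i$ gives $b - a = k + 1 + (j - i) \ge k+1$. The condition $j \le n-k-2$ gives $b \le n-1$. Also $a \ge 0$, and $a < b$ because $k \ge 0$.

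Conversely, take $a < b$ with $b - a \ge k+1$ and set $i = a$ and $j = b - k - 1$. Then $b - a \ge k+1$ gives $j \ge i$. Since $b \le n-1$, we get $j \le n-k-2$, i.e. $j < n-k-1$. Finally $i = a \le b - k - 1 \le n - k - 2$, so $i < n-k-1$, and $i \ge 0$ is automatic. This correspondence is clearly a bijection between the two index sets.

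The lemma also has to make sense as stated. Because $k \ge \lfloor (n-1)/2 \rfloor$, the left positions $\{0,\ldots,n-k-2\}$ and the right positions $\{k+1,\ldots,n-1\}$ are disjoint. Because $k \le n-2$, both ranges are nonempty. So $\pi^{-1}(i)$ and $\pi^{-1}(k+j+1)$ are always distinct nodes, which I would note in passing. This hypothesis on $k$ is not actually needed for the equivalence itself; the index arithmetic above is the only real work. The main obstacle is just getting the boundary inequalities right, in particular checking $j \le n-k-2$ and $i \le n-k-2$ in the converse direction.
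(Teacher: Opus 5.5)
Your proposal is correct and follows the paper's proof: your first inclusion is its forward direction (positions $i$ and $k+j+1$ differ by at least $k+1$), and your converse inclusion with $i = a$, $j = b-k-1$ is its contrapositive argument setting $i = \pi(u)$, $j = \pi(v)-(k+1)$, merely repackaged as a set equality. Your side remark is also accurate: the lower bound $k \ge \lfloor (n-1)/2 \rfloor$ plays no role in the equivalence, and the paper's proof does not use it either.
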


\begin{proof}
    Let $G$, $k$, and $\pi$ be as defined above, and suppose that $\beta_{\pi}(G) \le k$. Then by definition, no edge has endpoints whose images under $\pi$ differ by more than $k$. For any fixed $i$ and $j$ with $0 \le i < n - k - 1$ and $i \le j < n - k - 1$, the positions $i$ and $k + j + 1$ differ by $(k + j + 1) - i \ge k + 1 > k$, so it follows that $\{\pi^{-1}(i), \pi^{-1}(k + j + 1)\} \notin E(G)$, and we have proven the forward direction of the lemma.

    We now prove the backward direction by contrapositive: suppose that $\beta_{\pi}(G) > k$. Then there exists an edge $\{u, v\} \in E(G)$ with $\lvert \pi(u) - \pi(v) \rvert > k$. Without loss of generality, assume that $\pi(u) < \pi(v)$, so $\pi(v) - \pi(u) \ge k + 1$. Observe that we can bound $\pi(u)$ from below and above by $\pi(u) \ge 0$ and
    \[\pi(u) \le \pi(v) - (k + 1) \le (n - 1) - (k + 1) < n - k - 1;\]
    similarly, we have $\pi(v) \ge \pi(u) + k + 1$ and $\pi(v) \le n - 1$. Finally, setting $i = \pi(u)$ and $j = \pi(v) - (k + 1)$ (so that $0 \le i < n - k - 1$ and $i \le j < n - k - 1$), we see that
    \[\{u, v\} = \{\pi^{-1}(\pi(u)), \pi^{-1}(\pi(v))\} = \{\pi^{-1}(i), \pi^{-1}(k + j + 1)\}\]
    forms an edge. By contrapositive, it follows that $\beta_{\pi}(G) \le k$ if no such edge exists, and we are done.
\end{proof}

It then follows that whether a layout $\pi$ of a graph $G$ on $n$ nodes satisfies $\beta_{\pi}(G) \le k$ depends solely on the left and right partial layouts it induces (at least when $k \ge \left\lfloor \frac{n - 1}{2} \right\rfloor$), motivating the following definition.

\begin{definition}[Feasible pair]\label{definition:4}
    Let $G$ be a graph on $n$ nodes, $k \in \left\{\left\lfloor \frac{n - 1}{2} \right\rfloor, \left\lfloor \frac{n - 1}{2} \right\rfloor + 1, \ldots, n - 2\right\}$ be an integer, and $\Lin$ and $\mathcal{R}$ be compatible left and right partial layouts of $G$ (with respect to bandwidth $k$). The pair $(\Lin, \mathcal{R})$ is called \emph{feasible} (w.r.t. bandwidth $k$) if $\beta_{\pi}(G) \le k$ for any layout $\pi$ that induces $\Lin$ and $\mathcal{R}$ as its left and right partial layouts.
\end{definition}

Using only Lemma \ref{lemma:3.1}, it is already possible to formulate a na\"ive algorithm to check whether $\beta(G) \le k$. Although we refrain from a comprehensive analysis for brevity, we briefly describe this algorithm below.
\begin{enumerate}
    \item Enumerate all left partial layouts of $G$. There are
    \[\frac{n!}{(n - (n - k - 1))!} \in O(n^{n - k - 1})\]
    of these, since we are taking all $(n - k - 1)$-permutations without repetition of all $n$ nodes.
    \item For each left partial layout $\Lin$, enumerate all right partial layouts of $G$ compatible with $\Lin$. There are
    \[\frac{(k + 1)!}{((k + 1) - (n - k - 1))!} \in O((k + 1)^{n - k - 1}) = O(n^{n - k - 1})\]
    such right partial layouts, since we are taking all $(n - k - 1)$-permutations without repetition of the remaining $n - (n - k - 1) = k + 1$ nodes.
    \item For each of the resulting $(\Lin, \mathcal{R})$ pairs, check whether $(\Lin, \mathcal{R})$ is feasible, and return early if so. By Lemma \ref{lemma:3.1}, we must verify $\{\Lin(i), \mathcal{R}(k + j + 1)\} \notin E(G)$ for all $0 \le i < n - k - 1$ and $i \le j < n - k - 1$ for each pair, amounting to
    \[\sum_{i = 0}^{n - k - 2} (n - k - i - 1) = \frac{(n - k - 1)(n - k)}{2} \in O((n - k)^2) = O(n^2)\]
    operations total.
\end{enumerate}
This procedure requires $O(n^{n - k - 1} \cdot n^{n - k - 1} \cdot n^2) = O(n^{2(n - k)})$ time, which is already effectively polynomial for large $k$ close to $n - 1$. Still, it is possible to derive an even more efficient approach---rather than enumerate all $O(n^{n - k - 1})$ right partial layouts compatible with each left partial layout $\Lin$, we can determine whether a valid $(\Lin, \mathcal{R})$ pair (given our fixed $\Lin$) exists far more quickly using Hall's marriage theorem.

\begin{theorem}\label{theorem:3.2}
    Let $G$ be a graph on $n$ nodes, $k \in \left\{\left\lfloor \frac{n - 1}{2} \right\rfloor, \left\lfloor \frac{n - 1}{2} \right\rfloor + 1, \ldots, n - 2\right\}$ be an integer, and $\Lin$ be a left partial layout of $G$ (with respect to bandwidth $k$). Additionally, define the sets
    \[A_j \coloneqq \{v \in V(G) \setminus \operatorname{im}(\Lin) : \{\Lin(i), v\} \notin E(G) \text{ for all } i \le j\}\]
    for all $0 \le j < n - k - 1$. Then there exists a right partial layout $\mathcal{R}$ (w.r.t. bandwidth $k$) compatible with $\Lin$ such that $(\Lin, \mathcal{R})$ is feasible if and only if
    \[\lvert A_j \rvert \ge n - k - j - 1 \quad \text{for all} \quad 0 \le j < n - k - 1.\]
\end{theorem}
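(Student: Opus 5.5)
The plan is to reduce the existence of a feasible $\mathcal{R}$ to a bipartite matching problem and then apply Hall's marriage theorem. The nestedness of the sets $A_j$ makes Hall's condition collapse to the $n-k-1$ inequalities in the statement.

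\textbf{Reformulation.} By Lemma \ref{lemma:3.1}, a compatible pair $(\Lin,\mathcal{R})$ is feasible if and only if $\{\Lin(i),\mathcal{R}(k+j+1)\}\notin E(G)$ for all $0\le i\le j<n-k-1$. For a fixed $j$, this says exactly that the node $\mathcal{R}(k+j+1)$ is non-adjacent to every $\Lin(i)$ with $i\le j$. Compatibility requires $\mathcal{R}(k+j+1)\notin\operatorname{im}(\Lin)$, which is built into the definition of $A_j$. So a compatible $\mathcal{R}$ making $(\Lin,\mathcal{R})$ feasible is precisely an injective map $j\mapsto \mathcal{R}(k+j+1)$ from $\{0,\ldots,n-k-2\}$ into $V(G)\setminus\operatorname{im}(\Lin)$ with $\mathcal{R}(k+j+1)\in A_j$ for every $j$. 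In other words, it is a matching saturating the left side of the bipartite graph with left vertices the positions $j$ and right vertices the nodes, where $j$ is joined to every node of $A_j$.

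I would also note that any such $\mathcal{R}$ extends to a full layout. The $2(n-k-1)\le n$ used nodes are distinct, and the remaining $n-2(n-k-1)\ge 0$ nodes can fill the middle positions $n-k-1,\ldots,k$ arbitrarily. This uses $k\ge\lfloor (n-1)/2\rfloor$, so the two domains are disjoint and the middle has the right size. Hence Definition \ref{definition:4} applies to the resulting pair.

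\textbf{Key observation: the sets are nested.} Since the condition defining $A_j$ quantifies over $i\le j$, we have
\[A_0\supseteq A_1\supseteq\cdots\supseteq A_{n-k-2}.\]

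\textbf{Necessity.} Suppose a valid $\mathcal{R}$ exists and fix $j$. The $n-k-j-1$ positions $j,j+1,\ldots,n-k-2$ receive distinct nodes, and each lies in its own set $A_{j'}\subseteq A_j$. Hence $\lvert A_j\rvert\ge n-k-j-1$.

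\textbf{Sufficiency.} Take any nonempty set $S$ of positions and let $j_0=\min S$. By nestedness, the neighbourhood of $S$ in the bipartite graph is $\bigcup_{j\in S}A_j=A_{j_0}$. Since $S\subseteq\{j_0,\ldots,n-k-2\}$, we get
\[\lvert S\rvert\le n-k-j_0-1\le\lvert A_{j_0}\rvert.\]
This is Hall's condition, so a saturating matching exists, which yields the required $\mathcal{R}$.

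Alternatively, one can avoid Hall's theorem by a greedy argument. Assign positions in decreasing order $j=n-k-2,\ldots,0$, each time choosing an unused node of $A_j$. When position $j$ is reached, $n-k-j-2$ nodes are already used, and all of them lie in $A_j$ by nestedness. So at least one node of $A_j$ remains free.

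\textbf{Main obstacle.} The only real subtlety is making the reformulation airtight: checking that Lemma \ref{lemma:3.1}'s condition decomposes per right position exactly into membership in $A_j$, and that compatibility and extension to a full layout come for free. Once the sets are seen to be nested, the Hall step is immediate.
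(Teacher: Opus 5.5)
Your proof is correct and follows the paper's argument. You reformulate via Lemma \ref{lemma:3.1} as a saturating matching with position $j$ joined to $A_j$, use nestedness to get $\bigcup_{j\in S}A_j=A_{\min S}$, and apply Hall's theorem; your direct pigeonhole argument for necessity is just Hall's condition for $S=\{j,\ldots,n-k-2\}$, and your greedy alternative is the construction the paper states separately as Corollary \ref{corollary:3.3}.
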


\begin{proof}
    Let $G$, $k$, $\Lin$, and the $A_j$'s be as defined above. We now consider whether there exists a compatible right partial layout $\mathcal{R}$ such that $(\Lin, \mathcal{R})$ is feasible. By Lemma \ref{lemma:3.1} and Definition \ref{definition:4}, $(\Lin, \mathcal{R})$ is feasible if and only if $\{\Lin(i), \mathcal{R}(k + j + 1)\} \notin E(G)$ for all $0 \le i < n - k - 1$ and $i \le j < n - k - 1$. By construction of the $A_j$'s, this is equivalent to requiring $\mathcal{R}(k + j + 1) \in A_j$ for all $0 \le j < n - k - 1$.

    As such, we can reformulate this as a bipartite matching problem, where $\operatorname{domain}(\mathcal{R}) = \{k + 1, k + 2, \ldots, n - 1\}$ is on one side, $V(G) \setminus \operatorname{im}(\Lin)$ is on the other, and each $k + j + 1 \in \operatorname{domain}(\mathcal{R})$ is adjacent to all $v \in A_j \subseteq V(G) \setminus \operatorname{im}(\Lin)$. Hall's marriage theorem then tells us that we can construct such a matching if and only if for all subsets $S \subseteq \{0, 1, \ldots, n - k - 2\}$,
    \[\left\lvert \bigcup_{j \in S} A_j \right\rvert \ge \lvert S \rvert.\]
    It is useful to observe here that the $A_j$'s are nested, with $A_0 \supseteq A_1 \supseteq \cdots \supseteq A_{n - k - 2}$. Therefore, we have
    \[\bigcup_{j \in S} A_j = A_{\min S},\]
    and so we see that Hall's condition for a feasible $(\Lin, \mathcal{R})$ pair is equivalent to having $\lvert A_{\min S} \rvert \ge \lvert S \rvert$ for all subsets $S$.

    For any fixed $j$, the most restrictive constraint arises when $\min S = j$ and $\lvert S \rvert$ is maximized---namely, when $S = \{j, j + 1, \ldots, n - k - 2\}$. This gives $\lvert S \rvert = n - k - j - 1$, so indeed $\lvert A_{\min S} \rvert \ge \lvert S \rvert$ only if $\lvert A_j \rvert \ge n - k - j - 1$ for all $0 \le j < n - k - 1$. Conversely, if $\lvert A_j \rvert \ge n - k - j - 1$ for all $0 \le j < n - k - 1$, then for any subset $S$ with $\min S = j$, we have
    \[\lvert A_{\min S} \rvert = \lvert A_j \rvert \ge n - k - j - 1 = \lvert \{j, j + 1, \ldots, n - k - 2\} \rvert \ge \lvert S \rvert,\]
    so $\lvert A_{\min S} \rvert \ge \lvert S \rvert$ if $\lvert A_j \rvert \ge n - k - j - 1$ for all $0 \le j < n - k - 1$.

    We have therefore shown that there exists a right partial layout $\mathcal{R}$ such that $(\Lin, \mathcal{R})$ is feasible if and only if $\lvert A_j \rvert \ge n - k - j - 1$ for all $0 \le j < n - k - 1$, exactly as desired.
\end{proof}

In addition to providing a condition for when it is possible to find a feasible $(\Lin, \mathcal{R})$ pair for a given left partial layout $\Lin$, these results also allow us to explicitly construct such an $\mathcal{R}$.

\begin{corollary}\label{corollary:3.3}
    Let $G$, $k$, $\Lin$, and the $A_j$'s be as defined in Theorem \ref{theorem:3.2}, and suppose that the Hall conditions $\lvert A_j \rvert \ge n - k - j - 1$ are satisfied for all $0 \le j < n - k - 1$. Then we can construct a right partial layout $\mathcal{R}$ compatible with $\Lin$ such that $(\Lin, \mathcal{R})$ is feasible as follows: for each $j = n - k - 2, n - k - 3, \ldots, 0$ (in decreasing order), define $\mathcal{R}(k + j + 1)$ to be any node in $A_j$ not already selected in a previous step.
\end{corollary}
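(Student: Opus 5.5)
The plan is to show that the greedy procedure never gets stuck, and that the map it produces is a compatible right partial layout with $\mathcal{R}(k + j + 1) \in A_j$ for every $j$. Feasibility then follows from the reduction in the proof of Theorem \ref{theorem:3.2}: combined with Lemma \ref{lemma:3.1} and Definition \ref{definition:4}, $(\Lin, \mathcal{R})$ is feasible exactly when each $\mathcal{R}(k + j + 1)$ lies in $A_j$.

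The key step is a counting argument that uses the nesting $A_0 \supseteq A_1 \supseteq \cdots \supseteq A_{n - k - 2}$. Fix a step $j$, processed after the steps $j' = n - k - 2, \ldots, j + 1$. The nodes chosen so far are $\mathcal{R}(k + j' + 1)$ for $j < j' \le n - k - 2$. Each of them lies in $A_{j'} \subseteq A_j$, and there are exactly $n - k - 2 - j$ of them. By the Hall condition $\lvert A_j \rvert \ge n - k - j - 1$, at least
\[\lvert A_j \rvert - (n - k - j - 2) \ge 1\]
elements of $A_j$ remain unselected. So a valid choice always exists.

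Since each chosen node must not have been selected before, the resulting map $\mathcal{R} : \{k + 1, \ldots, n - 1\} \to V(G)$ is injective, so it is a right partial layout. Because every $A_j \subseteq V(G) \setminus \operatorname{im}(\Lin)$, we get $\operatorname{im}(\mathcal{R}) \cap \operatorname{im}(\Lin) = \emptyset$, which is compatibility by Definition \ref{definition:3}. Extending $\Lin$ and $\mathcal{R}$ by placing the remaining $n - 2(n - k - 1) = 2k - n + 2 \ge 0$ nodes arbitrarily in the middle positions gives a layout that induces them. Nonnegativity holds because $k \ge \lfloor (n-1)/2 \rfloor$.

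The only real subtlety is the processing order. Going in decreasing $j$ ensures all earlier picks lie inside the current $A_j$, so they are counted against its size. In increasing order a pick from the large set $A_0$ could exhaust a smaller $A_j$. I would point this out explicitly, since it is exactly where the nesting is used.
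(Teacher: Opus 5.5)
Your proof is correct and matches the paper's: at step $j$ of the decreasing sweep exactly $n-k-j-2$ nodes have already been chosen, so the Hall bound $\lvert A_j\rvert \ge n-k-j-1$ leaves an unchosen element of $A_j$. One small correction to your last paragraph: the nesting is not actually needed here, since $\lvert A_j \setminus S\rvert \ge \lvert A_j\rvert - \lvert S\rvert$ holds for any set $S$ of earlier picks; what the decreasing order buys is that only $n-k-j-2$ picks precede step $j$, whereas in increasing order $j$ picks would precede it.
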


\begin{proof}
    It suffices to show that at each step $j$, there exists at least one element of $A_j$ not yet selected to be in the image of $\mathcal{R}$. When processing some fixed index $j$ (counting down from $n - k - 2$), we have already assigned exactly $n - k - j - 2$ nodes. By hypothesis, there exists a right partial layout $\mathcal{R}$ such that $(\Lin, \mathcal{R})$ is feasible; Theorem \ref{theorem:3.2} tells us that $\lvert A_j \rvert \ge n - k - j - 1 = (n - k - j - 2) + 1$ is a necessary condition for this, so at least one element of $A_j$ must still be available, and we are done.
\end{proof}

Finally, before presenting the algorithm itself in Section \ref{section:4}, we collect some lower bounds on graph bandwidth that serve as early termination conditions for bandwidth recognition.

\begin{proposition}\label{proposition:3.4}
    Let $G$ be a graph with $m$ edges and $n$ nodes, and define the quantities $\alpha(G)$ and $\gamma(G)$ by
    \[\alpha(G) \coloneqq \max\limits_{v \in V(G)} \max\limits_{1 \le k \le \varepsilon_G(v)} \left\lceil \frac{\lvert N_{G,k}(v) \rvert}{2k} \right\rceil \quad \text{and} \quad \gamma(G) \coloneqq \min\limits_{v \in V(G)} \max\limits_{1 \le k \le \varepsilon_G(v)} \left\lceil \frac{\lvert N_{G,k}(v) \rvert}{k} \right\rceil.\]
    Then:
    \begin{enumerate}[label=(\roman*)]
        \item $\alpha(G)$ and $\gamma(G)$ both bound $\beta(G)$ from below.
        \item $\alpha(G)$ and $\gamma(G)$ are both computable in $O(mn)$ time and $O(n)$ auxiliary space.
        \item Neither of these bounds consistently dominates the other.
    \end{enumerate}
\end{proposition}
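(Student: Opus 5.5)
For part (i) the plan is to fix an arbitrary layout $\pi$ with $\beta_\pi(G) = b$ and bound the positions reachable from a node in few hops. The key observation, by induction on $k$ along a shortest path, is that any $u$ at distance at most $k$ from $v$ satisfies $\lvert \pi(u) - \pi(v) \rvert \le kb$.

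For $\alpha(G)$: the set $N_{G,k}(v)$ is then contained in the at most $2kb$ positions of $[\pi(v) - kb, \pi(v) + kb]$ other than $\pi(v)$. Hence $\lvert N_{G,k}(v) \rvert \le 2kb$, which gives $b \ge \lceil \lvert N_{G,k}(v) \rvert / (2k) \rceil$ for every $v$ and every $k$, and so $b \ge \alpha(G)$.

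For $\gamma(G)$: take $v_0 = \pi^{-1}(0)$. Every node of $N_{G,k}(v_0)$ lies in positions $1, \ldots, kb$, so $\lvert N_{G,k}(v_0) \rvert \le kb$. The inner maximum at $v_0$ is therefore at most $b$, and since $\gamma(G)$ is a minimum over $v$, it is at most the value at $v_0$, giving $\gamma(G) \le b$. Minimizing over all layouts $\pi$ yields both bounds on $\beta(G)$.

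One subtlety concerns disconnected graphs, where the eccentricity is infinite and $\gamma$ could misbehave. I would handle this by assuming $G$ connected (or restricting the ranges of $k$ to finite distances) and appealing to Fact~\ref{fact:2} to reduce to components. This is the point I expect to need the most care in stating precisely.

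For part (ii), I would run a BFS from each node $v$, maintaining a single reusable distance array and a running counter of nodes discovered so far. When BFS finishes layer $k$, the counter equals $\lvert N_{G,k}(v) \rvert$, so the quantity $\lceil \lvert N_{G,k}(v) \rvert / k \rceil$ (and the $2k$ variant) can be updated on the fly, together with the inner maxima for both $\alpha$ and $\gamma$. Each BFS costs $O(m+n)$ time and uses $O(n)$ auxiliary space for the queue and distances, so the total time is $O(n(m+n))$. This is $O(mn)$ for connected $G$, since then $m \ge n-1$.

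For part (iii), I would exhibit two explicit families.
\begin{itemize}
\item The complete graph $K_n$ with $n \ge 3$: only $k = 1$ occurs, so $\alpha(K_n) = \lceil (n-1)/2 \rceil < n - 1 = \gamma(K_n)$, and $\gamma$ strictly dominates.
\item A star $K_{1,s}$ with a long path of length $L$ attached to one leaf, with $s$ large and $L \gg s$: at the star centre, $\alpha \ge \lceil (s+1)/2 \rceil$. At the far endpoint of the path, $\lvert N_{G,k} \rvert = k$ for $k \le L$ and at most $L + s + 1$ thereafter, so the inner maximum there is $O(1)$ once $L \ge s$. Hence $\gamma$ is small and $\alpha > \gamma$.
\end{itemize}
Verifying the second example is the only computation needed, and it is routine.
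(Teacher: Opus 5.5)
Your proof is correct. It differs from the paper mainly in being self-contained. The paper proves (i) and (ii) by citing Propositions 4 and 6 of \cite{CSG05}, adding only that BFS from each node needs $O(n)$ auxiliary space. For (iii) it cites \cite{CSG05} and asserts, with the calculations omitted, that $K_{n,n}$ gives $\alpha<\gamma$ while the lollipop $L_{n,n}$ gives $\alpha>\gamma$.

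You instead prove (i) directly by the triangle inequality along shortest paths. For $\alpha$ you pack $N_{G,k}(v)$ into the $2kb$ slots around $\pi(v)$. For $\gamma$ you pack $N_{G,k}(\pi^{-1}(0))$ into the $kb$ slots to the right of the leftmost node, which also shows why $\gamma$ is a minimum over $v$ rather than a maximum. You give an explicit layer-by-layer BFS for (ii), and you replace the paper's examples with $K_n$ and a star with a long tail, both checkable by hand.

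The paper's route is shorter and rests on published results. Yours lets the reader verify everything directly, and it exposes a point the paper glosses over: the honest running time is $O(n(m+n))$, which is $O(mn)$ only when $m=\Omega(n)$ (e.g.\ for connected $G$).

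Two minor corrections:
\begin{itemize}
\item Your connectivity worry is unnecessary for (i). Your argument bounds the ratio for every $k\ge 1$ and never uses finiteness of $\varepsilon_G(v)$.
\item The paper's $N_{G,k}(v)$ excludes $v$ itself; including it is the convention of \cite{CSG05}, which the paper explicitly adjusts for. So at the star centre $c$ you have $\lvert N_{G,1}(c)\rvert = s$, and the bound is $\lceil s/2\rceil$, not $\lceil (s+1)/2\rceil$. Likewise, your BFS counter must not count the source.
\end{itemize}
The second slip does not hurt the example. At the far endpoint $p$ of the tail, $\lvert N_{G,k}(p)\rvert = k$ for $k\le L+1$ and $\lvert N_{G,L+2}(p)\rvert = L+s$. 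So the inner maximum at $p$ is at most $2$ once $L\ge s$, giving $\alpha\ge\lceil s/2\rceil\ge 3>2\ge\gamma$ for $s\ge 5$.
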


\begin{proof}
    Propositions 4 and 6 of \cite{CSG05} give exactly the statements of (i) and (ii); note that they include the node itself in their definition of a $k$-hop neighbourhood---for which they also use slightly different notation---so our $\lvert N_{G,k}(v) \rvert$ corresponds to their $\lvert N_k(v) \rvert - 1$. The paper also gives constructive procedures to compute both $\alpha(G)$ and $\gamma(G)$ in $O(mn)$ time (instead of merely proving that it is possible to do so in theory), allowing us to actually invoke these bounds in our algorithm later in Section \ref{section:4}. Although space complexity claims are not explicitly presented therein, we also note that the dominating cost lies in computing the $k$-hop neighbourhood sizes via breadth-first search from each node, which requires only $O(n)$ auxiliary space beyond storage of the input graph.

    For (iii), \cite[p. 360]{CSG05} also note that there are graphs for which $\alpha$ is a tighter bound and others for which $\gamma$ is a tighter bound; we illustrate this ourselves with some specific examples. For instance, although we omit the actual calculations here for brevity, it is straightforward to see that the complete multipartite graph $K_{n,n}$ gives
    \[\alpha(K_{n,n}) = \left\lceil \frac{n}{2} \right\rceil < n = \gamma(K_{n,n})\]
    for all $n \ge 2$, whereas the lollipop graph $L_{n,n}$ gives
    \[\alpha(L_{n,n}) = \left\lceil \frac{n}{2} \right\rceil > 2 = \gamma(L_{n,n})\]
    for all $n \ge 5$. As such, it is useful to apply both bounds to create early termination conditions for any bandwidth recognition algorithm, including ours.
\end{proof}

We are now in possession of all the mathematical preliminaries needed to develop our algorithm.

\section{The algorithm}\label{section:4}

Once again, let $G$ be a graph on $n$ nodes and $k \in \left\{\left\lfloor \frac{n - 1}{2} \right\rfloor, \left\lfloor \frac{n - 1}{2} \right\rfloor + 1, \ldots, n - 2\right\}$ be an integer; we seek an answer to the question of whether $\beta(G) \le k$. As we have seen, Theorem \ref{theorem:3.2} allows us to check whether a given left partial layout $\Lin$ admits a compatible right partial layout $\mathcal{R}$ such that $(\Lin, \mathcal{R})$ is feasible---namely, by verifying that $\lvert A_j \rvert \ge n - k - j - 1$ for all $0 \le j < n - k - 1$ (where the $A_j$'s are as defined therein). However, computing each $\lvert A_j \rvert$ from scratch would require iterating over all $k + 1$ unplaced nodes in $V(G) \setminus \operatorname{im}(\Lin)$ and checking adjacency to $\Lin(i)$ for all $0 \le i \le j$, amounting to
\[\sum_{j = 0}^{n - k - 2} (k + 1)(j + 1) = (k + 1) \cdot \frac{(n - k - 1)(n - k)}{2} \in O(k(n - k)^2)\]
operations. We therefore introduce two additional data structures---a hash map \texttt{blockedMap} and an array \texttt{blockedArr}---to help us verify this condition more efficiently (namely, in $O(nk)$ time per left partial layout).

Fix a left partial layout $\Lin$ and, for convenience, denote the set of unplaced nodes by $V' \coloneqq V(G) \setminus \operatorname{im}(\Lin)$. For each unplaced node $v \in V'$, we define $\texttt{blockedMap}[v]$ to be the minimum index $i$ such that $\{\Lin(i), v\} \in E(G)$ (or $\infty$ by convention, if no $\Lin(i)$ is adjacent to $v$). Observe that $v \in A_j$ if and only if $\texttt{blockedMap}[v] > j$, since each $A_j \subseteq V'$ consists precisely of those nodes in $V'$ that are not adjacent to any $\Lin(i)$ with $i \le j$. We then define \texttt{blockedArr} to be an array containing the nodes in $V'$, sorted in increasing order by their corresponding \texttt{blockedMap} values. This allows us to compute each $\lvert A_j \rvert$ in $O(\log{k})$ time: since $\lvert V' \rvert = k + 1$ and a node $v \in V'$ belongs to $A_j$ if and only if $\texttt{blockedMap}[v] > j$, we have
\[\lvert A_j \rvert = (k + 1) - \bigl\lvert \{v \in V' : \texttt{blockedMap}[v] \le j\} \bigr\rvert.\]
The second term is precisely the number of nodes in \texttt{blockedArr} whose \texttt{blockedMap} values are at most $j$, which can be computed via binary search on the sorted array in $O(\log{k})$ time.

If taken in isolation, we could then say that since we consider $j \in \{0, 1, \ldots, n - k - 2\}$, it would take us $O((n - k)\log{k})$ time to compute all the $\lvert A_j \rvert$'s and thus determine the existence of a feasible $(\Lin, \mathcal{R})$ pair for our fixed $\Lin$. However, we also need to take into consideration the costs of building \texttt{blockedMap} and sorting \texttt{blockedArr}. Building \texttt{blockedMap} takes $O(k(n - k))$ time, since for each of the $k + 1$ nodes in $V'$, we check adjacency to each of the $n - k - 1$ nodes in $\operatorname{im}(\Lin)$. Meanwhile, sorting \texttt{blockedArr} takes $O(k\log{k})$ time, using any standard comparison-based algorithm.

Together with the cost of verifying all Hall conditions, the total time per left partial layout is then $O((n - k)\log{k} + k(n - k) + k\log{k})$. Depending on the value of $k$ relative to $n$, either the first or the second term can dominate; since they are all in $O(nk)$, we adopt this as our upper bound on time complexity per left partial layout. (We defer a more thorough verification of this bound to Subsection \ref{subsection:4.1}.)

Note that \texttt{blockedArr} also facilitates the construction of a right partial layout $\mathcal{R}$ such that $(\Lin, \mathcal{R})$ is feasible, as described in Corollary \ref{corollary:3.3}. Rather than explicitly compute all $A_j$'s (which would require unnecessary space allocation) and track which nodes have already been assigned, we can simply use the last $n - k - 1$ elements of \texttt{blockedArr} (i.e., those with the largest \texttt{blockedMap} values) to construct $\mathcal{R}$. We formalize this procedure below.

\begin{proposition}\label{proposition:4.1}
    Let $G$, $k$, $\Lin$, and the $A_j$'s be as defined in Theorem \ref{theorem:3.2}, and suppose that the Hall conditions $\lvert A_j \rvert \ge n - k - j - 1$ are satisfied for all $0 \le j < n - k - 1$. Then the right partial layout $\mathcal{R}$ defined by
    \[\mathcal{R}(k + j + 1) \coloneqq \textup{\texttt{blockedArr}}[2k - n + j + 2]\]
    for all $0 \le j < n - k - 1$ forms a feasible pair with $\Lin$.
\end{proposition}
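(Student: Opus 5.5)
The plan is to show that the assignment $\mathcal{R}(k+j+1) = \texttt{blockedArr}[2k-n+j+2]$ places a node of $A_j$ at each position $k+j+1$. Once that holds, the reformulation in the proof of Theorem \ref{theorem:3.2} (itself resting on Lemma \ref{lemma:3.1}) gives feasibility. Two preliminary checks come first. The indices $2k-n+j+2$ for $0 \le j < n-k-1$ run exactly over $k+2-(n-k-1), \ldots, k$, which are the last $n-k-1$ positions of an array of length $\lvert V' \rvert = k+1$. Since $k \ge \lfloor (n-1)/2 \rfloor$, the smallest of these indices, $2k-n+2$, is nonnegative. Hence $\mathcal{R}$ is injective with image in $V' = V(G) \setminus \operatorname{im}(\Lin)$, so it is a right partial layout compatible with $\Lin$.

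The main step is to show $\texttt{blockedArr}[2k-n+j+2] \in A_j$, which is equivalent to $\texttt{blockedMap}[\texttt{blockedArr}[2k-n+j+2]] > j$. I argue by contradiction using sortedness. Suppose the value at index $p \coloneqq 2k-n+j+2$ is at most $j$. Because \texttt{blockedArr} is sorted in increasing order of \texttt{blockedMap}, every entry at an index $\le p$ also has value $\le j$, giving at least $p+1$ nodes of $V'$ outside $A_j$. Then
\[\lvert A_j \rvert \le (k+1) - (p+1) = k - (2k-n+j+2) = n-k-j-2 < n-k-j-1,\]
which contradicts the Hall condition for this $j$.

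It remains to assemble the pieces. Every $j$ satisfies $\mathcal{R}(k+j+1) \in A_j$. By the definition of $A_j$, this means $\{\Lin(i), \mathcal{R}(k+j+1)\} \notin E(G)$ for all $i \le j$. By Lemma \ref{lemma:3.1} and Definition \ref{definition:4}, every layout inducing $\Lin$ and $\mathcal{R}$ then has bandwidth at most $k$, so $(\Lin, \mathcal{R})$ is feasible.

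The only delicate point is the index bookkeeping: matching the offset $2k-n+2$ to the identity $\lvert A_j \rvert = (k+1) - \#\{v : \texttt{blockedMap}[v] \le j\}$ without an off-by-one error. Ties in \texttt{blockedMap} values are harmless, because the argument uses only the monotonicity of the sorted array. The right-to-left greedy interpretation of Corollary \ref{corollary:3.3} is not needed; the sortedness argument above replaces it.
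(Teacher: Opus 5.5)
Your proposal is correct and is essentially the paper's own argument: both combine the identity $\lvert A_j \rvert = (k+1) - \bigl\lvert \{v \in V' : \texttt{blockedMap}[v] \le j\} \bigr\rvert$, the Hall condition, and the sortedness of \texttt{blockedArr} to show $\texttt{blockedArr}[2k-n+j+2] \in A_j$. Your contradiction is just the contrapositive of the paper's direct bound $\bigl\lvert \{v \in V' : \texttt{blockedMap}[v] \le j\} \bigr\rvert \le 2k-n+j+2$. Your extra checks (that $2k-n+2 \ge 0$ and that $\mathcal{R}$ is injective into $V'$) are sound, as is your remark that Corollary \ref{corollary:3.3} is not needed. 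One small slip: the first index is $(k+1)-(n-k-1) = 2k-n+2$, not $k+2-(n-k-1)$ as you wrote, though you use the correct value afterwards.
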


\begin{proof}
    Corollary \ref{corollary:3.3} guarantees that when constructing $\mathcal{R}$ by iterating over $j = n - k - 2, n - k - 3, \ldots, 0$ (in decreasing order) and assigning any unused node from $A_j$ to $\mathcal{R}(k + j + 1)$ at each step, at least one such node is always available. We show that $\texttt{blockedArr}[2k - n + j + 2]$ is always such a node.

    As in the statement of the proposition, define the right partial layout $\mathcal{R}$ by
    \[\mathcal{R}(k + j + 1) \coloneqq \texttt{blockedArr}[2k - n + j + 2]\]
    for all $0 \le j < n - k - 1$, and recall that \texttt{blockedArr} is an array of the $k + 1$ nodes in $V' \coloneqq V(G) \setminus \operatorname{im}(\Lin)$, sorted in increasing order by their corresponding \texttt{blockedMap} values. The last $n - k - 1$ elements of this array occupy indices $2k - n + 2$ through to $k$, so these are precisely the elements in the image of $\mathcal{R}$.

    Theorem \ref{theorem:3.2} tells us that it suffices to verify that $\mathcal{R}(k + j + 1) \in A_j$ for all $0 \le j < n - k - 1$ to show that $(\Lin, \mathcal{R})$ is feasible. By construction, a node $v$ belongs to $A_j$ if and only if $\texttt{blockedMap}[v] > j$. Since
    \[\lvert A_j \rvert = (k + 1) - \bigl\lvert \{v \in V' : \texttt{blockedMap}[v] \le j\} \bigr\rvert,\]
    the Hall condition $\lvert A_j \rvert \ge n - k - j - 1$ (which holds by our initial hypothesis) implies that
    \[\bigl\lvert \{u \in V' : \texttt{blockedMap}[u] \le j\} \bigr\rvert \le (k + 1) - (n - k - j - 1) = 2k - n + j + 2.\]
    Therefore, nodes at indices $2k - n + j + 2$ and beyond in \texttt{blockedArr} have \texttt{blockedMap} values exceeding $j$, placing them in $A_j$. This directly implies that
    \[\mathcal{R}(k + j + 1) = \texttt{blockedArr}[2k - n + j + 2] \in A_j\]
    for all $0 \le j < n - k - 1$, so $(\Lin, \mathcal{R})$ is indeed feasible, as desired.
\end{proof}

Finally, we remark that although the algorithm as presented technically applies to disconnected graphs as well, it is more efficient in practice to run it separately on each connected component, since the bandwidth of any graph is equal to the maximum bandwidth of its connected components (Fact \ref{fact:2}). Having said that, the full procedure is given in the Algorithm \ref{algorithm:1} pseudocode.

\begin{algorithm}[!ht]
    \caption{Determine whether $\beta(G) \le k$ and, if so, find a layout $\pi$ of $G$ such that $\beta_{\pi}(G) \le k$}
    \label{algorithm:1}
    \begin{algorithmic}[1]
        \Function{HighBandwidthRecognition}{graph $G$, integer $k$} \Comment{For $\left\lfloor \frac{\lvert V(G) \rvert - 1}{2} \right\rfloor \le k < \lvert V(G) \rvert - 1$}
        \State let $n \gets \lvert V(G) \rvert$ \Comment{Simply for convenience of notation}

        \mbox{}

        \If{$k < \max\left\{\alpha(G), \gamma(G)\right\}$} \Comment{By Proposition \ref{proposition:3.4}(iii), it is helpful to compute both bounds}
        \State \Return $(\texttt{false},\, \texttt{null})$
        \EndIf

        \mbox{}

        \LeftComment{We use this array to implicitly represent a right partial layout (it is more efficient to preallocate it here outside of the loop, as its entries will be overwritten in each iteration anyway)}
        \State let $\texttt{right} \gets [\text{an empty array of length } n - k -1]$

        \mbox{}

        \ForEach{left partial layout $\Lin : \{0, 1, \ldots, n - k - 2\} \hookrightarrow V(G)$}
        \State let $V' \gets V(G) \setminus \operatorname{im}(\Lin)$
        \State let $\texttt{blockedMap} \gets [\text{an empty hash map from nodes to integers}]$

        \mbox{}

        \LeftComment{$v \in A_j$ if and only if $\texttt{blockedMap}[v] > j$, allowing for efficient computation of the $\lvert A_j \rvert's$}
        \ForEach{$v \in V'$}
	    \State let $\texttt{blockedMap}[v] \gets \min\bigl\{\{\infty\} \cup \{i : \{\Lin(i), v\} \in E(G)\}\bigr\}$
        \EndFor

        \mbox{}

        \State let $\texttt{blockedArr} \gets [\text{an array of all } v \in V' \text{ sorted by their } \texttt{blockedMap}[v] \text{ values}]$
        \State let $j \gets 0$
        \State let $\texttt{validRight} \gets \texttt{true}$

        \mbox{}

        \While{$j < n - k - 1$\AND $\texttt{validRight} = \texttt{true}$}
        \LeftCommentIndent{We can efficiently compute the $\bigl\lvert \{v \in V' : \texttt{blockedMap}[v] \le j\} \bigr\rvert$ term via binary search on the sorted \texttt{blockedArr} array, treating it as $\bigl\lvert \{i : \texttt{blockedMap}[\texttt{blockedArr}[i]] \le j\} \bigr\rvert$ instead}
        \State let $\lvert A_j \rvert \gets k + 1 - \bigl\lvert \{v \in V' : \texttt{blockedMap}[v] \le j\} \bigr\rvert$

        \mbox{}

        \LeftComment{By Theorem \ref{theorem:3.2}, a feasible $(\Lin, \mathcal{R})$ pair exists if and only if this inequality holds for all $j$}
        \If{$\lvert A_j \rvert \ge n - k - j - 1$}
        \LeftCommentIndent{By Corollary \ref{corollary:3.3}/Proposition \ref{proposition:4.1}, this is a valid way to construct a right partial layout that forms a feasible pair with $\Lin$}
        \State let $\texttt{right}[j] \gets \texttt{blockedArr}[2k - n + j + 2]$
        \State let $j \gets j + 1$
        \Else
        \State let $\texttt{validRight} \gets \texttt{false}$
        \EndIf
        \EndWhile

        \mbox{}

        \If{$\texttt{validRight} = \texttt{true}$}
        \State let $\mathcal{R} \gets [\text{the right partial layout mapping each $k + j + 1\mapsto \texttt{right}[j]$}]$
        \LeftComment{We can assign the nodes not in $\Lin$ or $\mathcal{R}$ to the remaining indices in any arbitrary order}
        \State \Return $(\texttt{true},\, [\text{any layout of } G \text{ inducing } \Lin \text{ and } \mathcal{R} \text{ as left and right partial layouts}])$
        \EndIf
        \EndFor

        \mbox{}

        \State \Return $(\texttt{false},\, \texttt{null})$
        \EndFunction
    \end{algorithmic}
\end{algorithm}

\subsection{Complexity analysis}\label{subsection:4.1}

We now analyze the time and space complexity of Algorithm \ref{algorithm:1}, proving that it requires $O(n^{n - k + 1})$ time and $O(n)$ auxiliary space to determine whether a graph $G$ with $m$ edges and $n$ nodes has bandwidth at most $k$.

The algorithm begins by computing the $\alpha$ and $\gamma$ lower bounds on $\beta(G)$, which, by Proposition \ref{proposition:3.4}(ii), each require $O(mn)$ time. If either bound exceeds $k$, the algorithm terminates immediately; otherwise, the main loop iterates over all left partial layouts (with respect to bandwidth $k$) of $G$. As mentioned earlier in Section \ref{section:3}, this is precisely the set of all $(n - k - 1)$-permutations without repetition of $V(G)$ (which has $n$ elements), amounting to
\[\frac{n!}{(n - (n - k - 1))!} \in O(n^{n - k - 1})\]
left partial layouts total. Standard techniques for enumerating partial permutations allow each successive partial layout to be generated in $O(n)$ amortized time---for instance, \cite{Jim98} gives an efficient procedure that first generates all $(n - k - 1)$-combinations of $n$ elements, then generates all permutations of each combination. Below, we show that the processing cost associated with each left partial layout is $O(nk)$, dominating this $O(n)$ (amortized) enumeration cost.

For each left partial layout $\Lin$, the algorithm performs the following operations. First, building the \texttt{blockedMap} hash map (lines $10$--$12$) requires checking, for each of the $k + 1$ unplaced nodes in $V'$, adjacency to each of the $n - k - 1$ nodes in $\operatorname{im}(\Lin)$; this requires $O((k + 1)(n - k - 1)) = O(k(n - k))$ time. Second, sorting the \texttt{blockedArr} array (line $13$) using any standard comparison-based algorithm requires $O(k\log{k})$ time. Third, verifying the Hall conditions from Theorem \ref{theorem:3.2} (lines $16$--$24$) requires computing $n - k - 1$ $\lvert A_j \rvert$'s via binary search on \texttt{blockedArr}, which has exactly $k + 1$ elements; this requires $O((n - k - 1)\log{(k + 1)}) = O((n - k)\log{k})$ time. (We also construct the \texttt{right} array---representing a right partial layout---concurrently with this step, avoiding a second pass over the data.)

Combining these contributions, we see that the time required to process each left partial layout is $O(k(n - k) + k\log{k} + (n - k)\log{k})$. Each of these terms is bounded above by $O(nk)$, with the first dominating when $k$ is closer to $\left\lfloor \frac{n - 1}{2} \right\rfloor$ (so $n - k > \log{k}$) and the second dominating when $k$ is closer to $n - 1$ (so $\log{k} > n - k$). Multiplying this $O(nk)$ cost by the $O(n^{n - k - 1})$ left partial layouts, the total time requirement for the main loop is then $O(n^{n - k - 1} \cdot nk) = O(n^{n - k} k) = O(n^{n - k + 1})$. (We simplify from $O(n^{n - k}k)$ to $O(n^{n - k + 1})$ because our $k \ge \left\lfloor \frac{n - 1}{2} \right\rfloor$ constraint ensures that $k \in \Theta(n)$, implying that $O(n^{n - k}k)$ is no tighter as a bound.) This always dominates the $O(mn)$ cost of computing the $\alpha$ and $\gamma$ lower bounds: the first nontrivial case is $k = n - 2$ (since all graphs on $n$ nodes have bandwidth $\le n - 1$), and for all $k \le n - 2$, we have $O(n^{n - k + 1}) \supseteq O(n^3) \supseteq O(mn)$ (since $m \in O(n^2)$).

If a feasible $(\Lin, \mathcal{R})$ pair is not found within this loop, then we simply return a negative answer, and the overall time complexity of the algorithm is thus $O(n^{n - k + 1})$ as seen above. Otherwise, if a feasible $(\Lin, \mathcal{R})$ pair is found, then constructing a full layout that induces $\Lin$ and $\mathcal{R}$ as its left and right partial layouts requires assigning the remaining $2k - n + 2$ nodes to indices $\{n - k, n - k + 1, \ldots, k\}$ in arbitrary order. This requires $O(k)$ time, which is again dominated by the preceding $O(n^{n - k + 1})$ time requirement. Therefore, we conclude that the algorithm's time complexity is bounded above by $O(n^{n - k + 1})$.

We now turn to space complexity. By Proposition \ref{proposition:3.4}(ii), computing the $\alpha$ and $\gamma$ lower bounds requires $O(n)$ auxiliary space. The main loop similarly uses only $O(n)$ auxiliary space: the \texttt{right} array contains $n - k - 1 \in O(n)$ elements, the \texttt{blockedArr} array contains $k + 1 \in O(n)$ elements, and the \texttt{blockedMap} hash map stores $k + 1 \in O(n)$ key-value pairs. The current left partial layout $\Lin$ in any given iteration can also be represented in $O(n)$ space as an array (similarly to how \texttt{right} is used to implicitly represent a right partial layout). Thus, the overall auxiliary space complexity (beyond storage of the input graph, which will typically require $O(m + n)$ space depending on the implementation) is $O(n)$, as required.

\subsection{Comparison to previous work}\label{subsection:4.2}

Whereas \cite{GS84}'s $O(n^k)$ algorithm is effectively polynomial for small $k$ close to the minimum bandwidth of $0$ but superexponential when $k$ scales with $n$, our $O(n^{n - k + 1})$ algorithm is instead polynomial for large $k$ close to the maximum possible bandwidth of $n - 1$. The papers \cite{DCM99} and \cite{CSG05}, on the other hand, present bandwidth minimization algorithms containing implicit recognition subroutines that are iteratively run for incrementing values of $k$ until an affirmative result is obtained. (This is never explicitly addressed by the authors, but for instance, lines $4$--$6$ of the pseudocode in \cite[Table 1]{DCM99} constitute a recognition algorithm in their own right.) No complexity claims are made about either the minimization algorithms presented or the implicit recognition subroutines, but any thorough analysis---which we omit here for brevity---shows that they are not polynomial for any regime of $k$ values. (They do, however, incorporate effective pruning techniques that, as demonstrated by our benchmarks in Section \ref{section:5}, consistently outperform \cite{GS84}'s algorithm.) Hence, our algorithm is the only known method that is polynomial in time complexity for large $k$.

One limitation of our approach is that we are restricted to checking bandwidth $\le k$ for $k \ge \left\lfloor \frac{n - 1}{2} \right\rfloor$, whereas other existing bandwidth recognition algorithms can check bandwidth $\le k$ for all $k \ge 0$. In any case, this is not a practical concern. Just as \cite{GS84}'s $O(n^k)$ algorithm is superexponential in complexity for large $k$ and thus impractical for such cases, we would never want to run our $O(n^{n - k + 1})$ algorithm for small $k < \left\lfloor \frac{n - 1}{2} \right\rfloor$ even if we theoretically could.

\section{Computational data}\label{section:5}

\subsection{Methodology}\label{subsection:5.1}

Using the interface provided by the Julia package MatrixBandwidth.jl (v0.3.0) \cite{Var25}, we implemented our algorithm and compared it against preexisting implementations of \cite{GS84}'s, \cite{DCM99}'s, and \cite{CSG05}'s bandwidth recognition algorithms. All benchmarks were executed using Julia 1.12.4 on a 2021 MacBook Pro running Fedora Linux Asahi Remix 42 with an Apple M1 Pro chip, 32 GB of RAM, and 10 cores (8 performance Firestorm and 2 efficiency Icestorm). The full implementation of our algorithm and the corresponding benchmark tests can be found in the companion code for this paper \cite{Var26}. (Note also that preexisting code in MatrixBandwidth.jl v0.3.0 computed the $\alpha$ and $\gamma$ lower bounds on $\beta$---which were used as early termination conditions for all four algorithms, not just ours---in $O(n^3)$ time instead of $O(mn)$ due to working with dense matrix representations of graphs. However, this was the only discrepancy with our work, and the overall algorithmic complexity remained the same.)

In particular, we compared our algorithm against $(1)$ the main recognition algorithm from \cite{GS84}, $(2)$ the recognition subroutine of the MB-ID minimization algorithm from \cite{DCM99}, and $(3)$ the recognition subroutine of the \texttt{LAYOUT\_LEFT\_TO\_RIGHT} minimization algorithm from \cite{CSG05}. This excluded the MB-PS algorithm from \cite{DCM99} and the \texttt{LAYOUT\_BOTH\_WAYS} algorithm from \cite{CSG05}. We refrained from testing MB-PS because it requires a ``depth'' parameter that is nontrivial to optimize, and performance is highly sensitive to this choice. We refrained from testing \texttt{LAYOUT\_BOTH\_WAYS} because it did not have a preexisting implementation in MatrixBandwidth.jl v0.3.0, being considerably more difficult to implement. Furthermore, results from the original papers showed that neither \texttt{LAYOUT\_LEFT\_TO\_RIGHT} nor \texttt{LAYOUT\_BOTH\_WAYS} consistently outperformed the other, and although MB-PS typically (but not always) outperformed MB-ID with an optimal depth parameter, running times nonetheless remained comparable.

We benchmarked all algorithms on both affirmative (bandwidth $\le k$) and negative (bandwidth $> k$) test cases. These instances were generated using MatrixBandwidth.jl's \texttt{random\_banded\_matrix} function, which---given an integer $n \ge 1$, an integer $\psi \in \{0, 1, \ldots, n - 1\}$, and a floating-point number $p \in (0, 1]$ as input---outputs the adjacency matrix of a graph $G$ on $n$ nodes with $\beta(G) \le \psi$. In particular, it assigns $n$ nodes the labels $\{0, 1, \ldots, n - 1\}$, then for each pair of nodes whose labels differ by at most $\psi$, it includes an edge with probability $p$. It also ensures that for all $0 < k \le \psi$, there exists at least one edge whose endpoints' labels differ by exactly $k$, even when $p$ is very small (although for the values of $p$ that we use---described below---this makes no difference in practice). This is the same method that \cite{DCM99} used to generate test cases for their experimental results, with the sole difference that $\texttt{random\_banded\_matrix}(n, \psi, p)$ guarantees at least one edge exists at each distance $k \in \{1, 2, \ldots, \psi\}$.

Given a fixed $n$ and $k$, we needed both affirmative test cases where $\beta(G) \le k$ and negative test cases where $\beta(G) > k$. It was straightforward to generate affirmative cases by randomly choosing $\psi \in \{k - 2, k - 1, k\}$ (not restricting it to $k$ so that variety was present, yielding more realistic test instances) and taking the graph $G = \texttt{random\_banded\_matrix}(n, \psi, p)$. (Here $p$ was randomly sampled from $[0.3, 0.6]$---the same range of values that \cite{DCM99} used.) We then applied random layouts $\pi$ to $G$ until $\beta_{\pi}(G) > k$, ensuring that the algorithms would need to discover a valid layout inducing bandwidth $\le k$ rather than have access to one from the start.

Negative cases were more difficult to generate. Taking $G = \texttt{random\_banded\_matrix}(n, \psi, p)$ guarantees that $\beta(G)$ is at most $\psi$, but it does not prevent $\beta(G)$ from being much less than $\psi$, especially when $p$ is small to moderate and $n$ is large. Therefore, we repeatedly sampled $\psi \in \{k + 1, k + 2, k + 3\}$ and $p \in [0.85, 0.95]$ and called $G = \texttt{random\_banded\_matrix}(n, \psi, p)$ until both $\max\{\alpha(G), \gamma(G)\} \le k$ and $\beta(G) > k$ were satisfied. (The first condition avoided early bounds-based termination, and the second condition was checked using the recognition subroutine of \cite{CSG05}'s \texttt{LAYOUT\_LEFT\_TO\_RIGHT} algorithm.)

We chose to use relatively small $n \in \{12, 14, 16, 18\}$ for comparison against \cite{GS84}---although their algorithm works for much larger $n$ when $k$ is small, its $O(n^k)$ time and space complexity render it ineffective when $k$ is large, restricting us to this range. Meanwhile, we used larger $n \in \{40, 45, 50, 55\}$ for comparison against \cite{DCM99} and \cite{CSG05}. For every fixed $n$, we varied over several large $k$ values and generated five affirmative and five negative test cases in this manner for each $(n, k)$ pair. In particular, we used $k \in \{n - 6, n - 4, n - 2\}$ for the affirmative cases (resulting in $15$ affirmative cases per $n$) and $k \in \{n - 6, n - 4\}$ for the negative cases (resulting in $10$ negative cases per $n$). (It would have been incredibly computationally expensive to generate a graph with bandwidth $> n - 3$ using the methodology described above without any bias toward certain structures, so we limited $k$ to at most $n - 4$ in the negative test cases.)

For each instance, we first ran each algorithm in a separate Julia subprocess with a $90$-second timeout to confirm that the affirmative/negative verdict matched what was expected, then benchmarked more accurately in the parent process using the Julia package BenchmarkTools.jl (v1.6.3) \cite{CR16}. (We used subprocesses rather than asynchronous execution because it is difficult and often unreliable to impose hard time limits without subprocess isolation, especially when the computations in question are CPU-intensive.) Higher timeout limits (e.g., $120$ seconds) led to out-of-memory errors and system freezes when running \cite{GS84}'s algorithm due to its $O(n^k)$ space requirement; although this was not an issue for the other algorithms, we decided to enforce this $90$-second limit universally for consistency. Furthermore, as supported by both \cite[Table 5]{CSG05} and our own benchmark data below in Table \ref{table:3}, the MB-ID algorithm from \cite{DCM99} is either very fast or very slow, being far less consistent than the other algorithms. Thus, our $90$-second limit prevented runaway executions of MB-ID.

We used BenchmarkTools.jl to automatically determine the optimal number of executions per timing measurement for each benchmark, collect multiple such measurements, and take the minimum runtime. (As justified in \cite{CR16}, the minimum estimator given by BenchmarkTools.jl provides a more reliable estimate of true execution time than the mean or median by minimizing contributions from environmental delays such as operating system jitter, cache effects, and context switches.) We observed that the subprocess timeout limit of $90$ seconds translated to an effective benchmark time cap of approximately $60$--$75$ seconds in the parent process, accounting for both the overhead of running a subprocess and BenchmarkTools.jl's optimizations (e.g., variable interpolation and reduced measurement overhead). (That is---if an algorithm run would have resulted in a benchmark time exceeding $60$--$75$ seconds in the parent process, then it would have taken over $90$ seconds in the initial subprocess due to the extra overhead and thus timed out before ever being benchmarked.)

Finally, after completing all algorithm runs, we saved the data in CSV files (available, alongside all companion code, at \cite{Var26}). Below, we report the number of instances solved, the number of instances exceeding the time limit, and (for solved instances) the mean and standard deviation of the minimum runtimes for each $(n, k)$ pair and algorithm tested.

\subsection{Results}\label{subsection:5.2}

Tables $1$--$4$ summarize our benchmark results. Each row corresponds to a fixed $(n, k)$ pair, and columns are grouped by algorithm. Within each group, the ``Solved'' and ``TLE'' (for time limit exceeded) columns indicate the number of instances that were and were not solved, respectively, while the ``Time'' column reports the mean and standard deviation of the minimum runtimes in milliseconds (computed only over solved instances). The bold entry in each row indicates the fastest mean runtime (regardless of the number of instances solved) across all algorithms for the corresponding $(n, k)$ pair.

\begin{table}[!ht]
    \centering
    \small
    \setlength{\tabcolsep}{4pt}
    \begin{tabular}{c|ccc|ccc}
        \toprule
        $(n, k)$ & \multicolumn{3}{c}{Our algorithm} & \multicolumn{3}{c}{$O(n^k)$ algorithm} \\
         & Solved & TLE & Time (ms) & Solved & TLE & Time (ms) \\
        \midrule
        $(12, 6)$ & $5$ & $0$ & {\boldmath$0.016 \pm 0.020$} & $5$ & $0$ & $13,\!762.4 \pm 6,\!463.82$ \\
        $(12, 8)$ & $5$ & $0$ & {\boldmath$0.0050 \pm 0.0001$} & $0$ & $5$ & $-$ \\
        $(12, 10)$ & $5$ & $0$ & {\boldmath$0.0048 \pm 0.0001$} & $0$ & $5$ & $-$ \\
        \midrule
        $(14, 8)$ & $5$ & $0$ & {\boldmath$2.19 \pm 3.80$} & $1$ & $4$ & $14,\!320.0$ \\
        $(14, 10)$ & $5$ & $0$ & {\boldmath$0.0069 \pm 0.0001$} & $0$ & $5$ & $-$ \\
        $(14, 12)$ & $5$ & $0$ & {\boldmath$0.0065 \pm 0.0001$} & $0$ & $5$ & $-$ \\
        \midrule
        $(16, 10)$ & $5$ & $0$ & {\boldmath$0.011 \pm 0.0051$} & $0$ & $5$ & $-$ \\
        $(16, 12)$ & $5$ & $0$ & {\boldmath$0.0089 \pm 0.0001$} & $0$ & $5$ & $-$ \\
        $(16, 14)$ & $5$ & $0$ & {\boldmath$0.0084 \pm 0.0002$} & $0$ & $5$ & $-$ \\
        \midrule
        $(18, 12)$ & $5$ & $0$ & {\boldmath$0.764 \pm 1.54$} & $0$ & $5$ & $-$ \\
        $(18, 14)$ & $5$ & $0$ & {\boldmath$0.011 \pm 0.0002$} & $0$ & $5$ & $-$ \\
        $(18, 16)$ & $5$ & $0$ & {\boldmath$0.011 \pm 0.0002$} & $0$ & $5$ & $-$ \\
        \bottomrule
    \end{tabular}
    \caption{Our algorithm vs. $O(n^k)$ algorithm \cite{GS84} (affirmative cases)}
    \label{table:1}
\end{table}

We begin with our benchmarks against \cite{GS84}'s $O(n^k)$ algorithm. Table \ref{table:1} presents results for affirmative test cases, where bandwidth is at most $k$. Our algorithm solved all $60$ instances across $(n, k)$ pairs, with runtimes consistently in the microsecond range for $k \in \{n - 4, n - 2\}$ and still in the millisecond range at worst for $k = n - 6$. In contrast, the $O(n^k)$ algorithm timed out (exceeding our $90$-second limit) on $54$ of these instances, successfully completing only $6$ instances at the smallest values of $n$ and $k$ tested. Moreover, the instances solved by the $O(n^k)$ algorithm took over $10$ seconds on average, ranging from $5,\!000$--$1,\!000,\!000$ times slower than our algorithm on the same instances.

\begin{table}[!ht]
    \centering
    \small
    \setlength{\tabcolsep}{4pt}
    \begin{tabular}{c|ccc|ccc}
        \toprule
        $(n, k)$ & \multicolumn{3}{c}{Our algorithm} & \multicolumn{3}{c}{$O(n^k)$ algorithm} \\
         & Solved & TLE & Time (ms) & Solved & TLE & Time (ms) \\
        \midrule
        $(12, 6)$ & $5$ & $0$ & $53.9 \pm 0.756$ & $5$ & $0$ & {\boldmath$0.063 \pm 0.050$} \\
        $(12, 8)$ & $5$ & $0$ & {\boldmath$0.730 \pm 0.0097$} & $5$ & $0$ & $27.3 \pm 32.1$ \\
        \midrule
        $(14, 8)$ & $5$ & $0$ & $181 \pm 2.12$ & $5$ & $0$ & {\boldmath$0.322 \pm 0.528$} \\
        $(14, 10)$ & $5$ & $0$ & {\boldmath$1.63 \pm 0.055$} & $4$ & $1$ & $12,\!663.3 \pm 6,\!300.23$ \\
        \midrule
        $(16, 10)$ & $5$ & $0$ & $522 \pm 3.08$ & $5$ & $0$ & {\boldmath$11.8 \pm 25.1$} \\
        $(16, 12)$ & $5$ & $0$ & {\boldmath$2.85 \pm 0.090$} & $0$ & $5$ & $-$ \\
        \midrule
        $(18, 12)$ & $5$ & $0$ & {\boldmath$1,\!145.39 \pm 19.3$} & $4$ & $1$ & $2,\!781.58 \pm 5,\!311.12$ \\
        $(18, 14)$ & $5$ & $0$ & {\boldmath$4.40 \pm 0.042$} & $0$ & $5$ & $-$ \\
        \bottomrule
    \end{tabular}
    \caption{Our algorithm vs. $O(n^k)$ algorithm \cite{GS84} (negative cases)}
    \label{table:2}
\end{table}

Table \ref{table:2} presents corresponding results for negative test cases, where bandwidth is strictly greater than $k$. The comparison here is more nuanced. The $O(n^k)$ algorithm was substantially faster for $k = n - 6$ and $n \in \{12, 14\}$, consistently completing in under a millisecond whereas our algorithm took tens to hundreds of milliseconds. For $k = n - 6$ and $n = 16$, the $O(n^k)$ algorithm still outperformed ours (although with a less pronounced gap), but for $k = n - 6$ and $n = 18$, it solved only $4$ out of $5$ instances and actually performed worse even on solved instances. Meanwhile, the $O(n^k)$ algorithm solved some instances for $k = n - 4$ and $n \in \{12, 14\}$, but our algorithm consistently outperformed it by a wide margin. For all other $n$ and $k$, the $O(n^k)$ algorithm always exceeded the time limit while our algorithm always succeeded (indeed, in less than $1.2$ seconds).

Two observations here are worth noting. First, our algorithm consistently performs better on affirmative cases than on negative cases whereas the opposite holds for the $O(n^k)$ algorithm, and second, our algorithm exhibits much lower variance on negative cases. Both phenomena follow from the same underlying reason: our algorithm can terminate early on affirmative cases (upon finding a valid layout) but must spend $\Theta(n^2)$ time checking Theorem \ref{theorem:3.2}'s Hall conditions for all
\[\frac{n!}{(n - (n - k - 1))!} \in \Theta(n^{n - k - 1})\]
left partial layouts on negative cases, yielding consistent runtimes. On the other hand, the $O(n^k)$ algorithm can terminate early due to its pruning strategy (see \cite{GS84}), but the effectiveness of this approach varies wildly with graph structure. Therefore, our algorithm requires $\Theta(n^{n - k + 1})$ time when solving negative cases, whereas a tight lower bound on our competitor's time requirement is $\Omega(nk^2)$ when pruning succeeds at every step, thus requiring only $O(n)$ calls to the $O(k^2)$ ``Update'' procedure \cite[Figure 2.2]{GS84}.

\begin{table}[!ht]
    \centering
    \small
    \setlength{\tabcolsep}{4pt}
    \begin{tabular}{c|ccc|ccc|ccc}
        \toprule
        $(n, k)$ & \multicolumn{3}{c}{Our algorithm} & \multicolumn{3}{c}{MB-ID} & \multicolumn{3}{c}{\texttt{LAYOUT\_LEFT\_TO\_RIGHT}} \\
         & Solved & TLE & Time (ms) & Solved & TLE & Time (ms) & Solved & TLE & Time (ms) \\
        \midrule
        $(40, 34)$ & $5$ & $0$ & $1.48 \pm 1.88$ & $4$ & $1$ & {\boldmath$0.147 \pm 0.023$} & $5$ & $0$ & $1.36 \pm 0.650$ \\
        $(40, 36)$ & $5$ & $0$ & {\boldmath$0.106 \pm 0.0037$} & $5$ & $0$ & $31.6 \pm 70.3$ & $5$ & $0$ & $0.979 \pm 0.060$ \\
        $(40, 38)$ & $5$ & $0$ & {\boldmath$0.102 \pm 0.0042$} & $5$ & $0$ & $0.130 \pm 0.0020$ & $5$ & $0$ & $0.980 \pm 0.075$ \\
        \midrule
        $(45, 39)$ & $5$ & $0$ & {\boldmath$0.147 \pm 0.0045$} & $5$ & $0$ & $0.194 \pm 0.010$ & $5$ & $0$ & $1.54 \pm 0.170$ \\
        $(45, 41)$ & $5$ & $0$ & {\boldmath$0.144 \pm 0.0047$} & $5$ & $0$ & $0.187 \pm 0.0077$ & $5$ & $0$ & $1.69 \pm 0.319$ \\
        $(45, 43)$ & $5$ & $0$ & {\boldmath$0.144 \pm 0.0065$} & $5$ & $0$ & $0.179 \pm 0.0043$ & $5$ & $0$ & $1.46 \pm 0.131$ \\
        \midrule
        $(50, 44)$ & $5$ & $0$ & {\boldmath$0.203 \pm 0.0068$} & $5$ & $0$ & $5,\!075.13 \pm 11,\!347.7$ & $5$ & $0$ & $2.46 \pm 0.588$ \\
        $(50, 46)$ & $5$ & $0$ & {\boldmath$0.206 \pm 0.0062$} & $4$ & $1$ & $0.248 \pm 0.0020$ & $5$ & $0$ & $1.88 \pm 0.185$ \\
        $(50, 48)$ & $5$ & $0$ & {\boldmath$0.196 \pm 0.0054$} & $5$ & $0$ & $0.239 \pm 0.0041$ & $5$ & $0$ & $1.95 \pm 0.130$ \\
        \midrule
        $(55, 49)$ & $5$ & $0$ & {\boldmath$0.341 \pm 0.158$} & $3$ & $2$ & $0.351 \pm 0.044$ & $5$ & $0$ & $3.49 \pm 1.06$ \\
        $(55, 51)$ & $5$ & $0$ & {\boldmath$0.268 \pm 0.0074$} & $4$ & $1$ & $0.322 \pm 0.0076$ & $5$ & $0$ & $2.59 \pm 0.173$ \\
        $(55, 53)$ & $5$ & $0$ & {\boldmath$0.269 \pm 0.0089$} & $5$ & $0$ & $0.320 \pm 0.0058$ & $5$ & $0$ & $2.45 \pm 0.169$ \\
        \bottomrule
    \end{tabular}
    \caption{Our algorithm vs. MB-ID \cite{DCM99} vs. \texttt{LAYOUT\_LEFT\_TO\_RIGHT} \cite{CSG05} (affirmative cases)}
    \label{table:3}
\end{table}

Table \ref{table:3} compares our algorithm against the recognition subroutines of the MB-ID \cite{DCM99} and \texttt{LAYOUT\_LEFT\_TO\_RIGHT} \cite{CSG05} bandwidth minimization algorithms on affirmative cases, this time with much larger $n$. Our algorithm solved all $60$ instances and achieved the fastest mean runtime in $11$ of the $12$ $(n, k)$ pairs (only losing to MB-ID for $n = 40$ and $k = n - 6$). \texttt{LAYOUT\_LEFT\_TO\_RIGHT} also solved all instances but was consistently slower than our algorithm, typically by a factor of $10$. MB-ID timed out on $5$ of the $60$ instances and exhibited highly variable performance, ranging from the sub-millisecond range to tens of seconds even for the same $(n, k)$ pairs. When MB-ID did perform well, it was typically slightly slower than our algorithm but still faster than \texttt{LAYOUT\_LEFT\_TO\_RIGHT}. (The raw data available at \cite{Var26}, with its greater degree of granularity, shows these patterns more clearly.)

\begin{table}[!ht]
    \centering
    \small
    \setlength{\tabcolsep}{4pt}
    \begin{tabular}{c|ccc|ccc|ccc}
        \toprule
        $(n, k)$ & \multicolumn{3}{c}{Our algorithm} & \multicolumn{3}{c}{MB-ID} & \multicolumn{3}{c}{\texttt{LAYOUT\_LEFT\_TO\_RIGHT}} \\
         & Solved & TLE & Time (ms) & Solved & TLE & Time (ms) & Solved & TLE & Time (ms) \\
        \midrule
        $(40, 34)$ & $0$ & $5$ & $-$ & $5$ & $0$ & {\boldmath$0.769 \pm 0.451$} & $5$ & $0$ & $21.4 \pm 14.7$ \\
        $(40, 36)$ & $5$ & $0$ & $127 \pm 2.91$ & $5$ & $0$ & {\boldmath$0.512 \pm 0.135$} & $5$ & $0$ & $11.8 \pm 4.49$ \\
        \midrule
        $(45, 39)$ & $0$ & $5$ & $-$ & $5$ & $0$ & {\boldmath$1.28 \pm 1.06$} & $5$ & $0$ & $38.9 \pm 37.0$ \\
        $(45, 41)$ & $5$ & $0$ & $220 \pm 4.93$ & $5$ & $0$ & {\boldmath$0.899 \pm 0.171$} & $5$ & $0$ & $25.6 \pm 4.87$ \\
        \midrule
        $(50, 44)$ & $0$ & $5$ & $-$ & $5$ & $0$ & {\boldmath$1.33 \pm 0.500$} & $5$ & $0$ & $42.9 \pm 17.7$ \\
        $(50, 46)$ & $5$ & $0$ & $306 \pm 3.76$ & $5$ & $0$ & {\boldmath$1.25 \pm 0.176$} & $5$ & $0$ & $43.4 \pm 7.56$ \\
        \midrule
        $(55, 49)$ & $0$ & $5$ & $-$ & $5$ & $0$ & {\boldmath$2.04 \pm 0.716$} & $5$ & $0$ & $76.3 \pm 33.2$ \\
        $(55, 51)$ & $5$ & $0$ & $469 \pm 4.38$ & $5$ & $0$ & {\boldmath$1.86 \pm 0.325$} & $5$ & $0$ & $69.0 \pm 14.2$ \\
        \bottomrule
    \end{tabular}
    \caption{Our algorithm vs. MB-ID \cite{DCM99} vs. \texttt{LAYOUT\_LEFT\_TO\_RIGHT} \cite{CSG05} (negative cases)}
    \label{table:4}
\end{table}

Finally, Table \ref{table:4} presents corresponding results for negative cases. Here, our algorithm's $\Theta(n^{n - k + 1})$ time complexity for negative cases---never breaking early---becomes a significant limitation. For $k = n - 6$, this amounts to $\Theta(n^7)$ time, causing timeouts on all $20$ instances due to the moderately large $n$ values. Even for $k = n - 4$, where our algorithm successfully solved all $20$ instances, it consistently required hundreds of milliseconds---typically about $250$ times slower than MB-ID and $10$ times slower than \texttt{LAYOUT\_LEFT\_TO\_RIGHT}.

In summary, our algorithm demonstrates clear advantages for affirmative cases across all values of $n$ and $k$ tested, consistently outperforming or at least matching the fastest existing approaches. Therefore, when it is hoped or expected that a graph on $n$ nodes has bandwidth at most $k$ and $k$ is close to $n$, our algorithm is the best available choice. For negative cases, however, existing approaches (particularly MB-ID) are preferable even when $n - k$ is relatively small, since our algorithm must exhaustively check all $\Theta(n^{n - k - 1})$ left partial layouts without the same opportunity for early termination that other algorithms' pruning strategies allow.

\section{Conclusion and future work}\label{section:6}

Motivated by applications of bandwidth recognition algorithms to fields like computational complexity theory \cite{RS83} and quantum information theory \cite{JMP25}---as well as the problem's general theoretical significance---we have developed the first algorithm specifically designed for high-bandwidth recognition. Given a graph $G$ on $n$ nodes and an integer $k \ge \left\lfloor \frac{n - 1}{2} \right\rfloor$, our algorithm leverages Hall's marriage theorem to determine whether $\beta(G) \le k$ in $O(n^{n - k + 1})$ time and $O(n)$ auxiliary space (beyond storage of the input graph). Just as \cite{GS84}'s classical $O(n^k)$ time and space algorithm is polynomial in complexity when $k$ is small and close to $0$, our algorithm is polynomial when $k$ is large and close to the maximum possible bandwidth of $n - 1$.

Our benchmark results show that our algorithm outperforms all other existing approaches when $n - k$ is small (namely, between $4$ and $6$) and $\beta(G)$ is indeed at most $k$. (Admittedly, our algorithm does suffer somewhat when $\beta(G) > k$ due to the lack of pruning strategies compared to others.) The most performant competitors were the recognition subroutines in the minimization algorithms presented by \cite{DCM99} and \cite{CSG05}, both of which were consistently slower than our algorithm. In particular, our algorithm achieved the fastest mean runtimes in $11$ of the $12$ test suites, each comprising $5$ random graphs on $n$ nodes with bandwidth at most $k$ (for a particular choice of $n$ and $k$).

These results point to several directions for future work. Integrating pruning strategies (see \cite{GS84,DCM99,CSG05}) to detect infeasibility earlier during left partial layout enumeration could improve performance on negative cases. Extending our approach to weighted graphs is another avenue---it might be possible to construct an analogous bandwidth recognition algorithm when bandwidth is defined as the maximum product of edge weight and label difference across all edges. Finally, it is worth exploring whether Hall's marriage theorem can be applied to the problem of low-bandwidth recognition as well.

\section*{Acknowledgements}

I would like to thank Nathaniel Johnston, Liam Keliher, and Laurie Ricker for valuable advice and guidance. Additional gratitude is owed to Alexander B. Matheson, Serena Davis, Luc Campbell, Marco Cognetta, and Ryan Tifenbach for further helpful conversations.

\bibliographystyle{alpha}
\bibliography{refs}

\end{document}